\def\BibTeX{{\rm B\kern-.05em{\sc i\kern-.025em b}\kern-.08emT\kern-.1667em\lower.7ex\hbox{E}\kern-.125emX}}
\newcommand{{\rh}}{{\widehat r}}
\newcommand{{\Rh}}{{\widehat R}}
\newcommand{\cA}{{\mathcal A}}
\newcommand{\anis}[1]{{\color{blue}\bf [Anisur: #1]}}
\newcommand{\shortOnly}[1]{\ifthenelse{\boolean{short}}{#1}{}}
\newcommand{\onlyShort}[1]{\ifthenelse{\boolean{short}}{#1}{}}
\newcommand{\longOnly}[1]{\ifthenelse{\boolean{short}}{}{#1}}
\newcommand{\onlyLong}[1]{\ifthenelse{\boolean{short}}{}{#1}}
\newcommand{\shortLong}[2]{\ifthenelse{\boolean{short}}{#2}{#1}}
\newcommand{\longShort}[2]{\ifthenelse{\boolean{short}}{#2}{#1}} 
\renewcommand\footnotetextcopyrightpermission[1]{} 
\algrenewcommand\ALG@beginalgorithmic{\footnotesize}
\def\acmBooktitle#1{\gdef\@acmBooktitle{#1}}
\begin{document}

%
\onlyLong{
  \title{Efficient Distributed Algorithms for the $K$-Nearest Neighbors Problem}
}

\shortOnly{
\title{Brief Announcement:
Efficient Distributed Algorithms for the $K$-Nearest Neighbors Problem}
}

\author{Reza Fathi}
\affiliation{%
  \institution{University of Houston}
\city{Houston}
\state{Texas, USA.} \postcode{77204}
}
 \email{rfathi@uh.edu}
%
\author{Anisur Rahaman Molla}
\authornote{Research supported by DST Inspire Faculty research grant DST/INSPIRE/04/2015/002801.}
\affiliation{%
  \institution{Indian Statistical Institute}
\city{Kolkata} 
\state{India}
\postcode{700108}
}
\email{molla@isical.ac.in}
%
\author{Gopal Pandurangan}
\authornote{Supported, in part, by NSF grants IIS-1633720, CCF-1540512,   and CCF-1717075, and by BSF grant 2016419.}
\affiliation{%
\institution{University of Houston}
\city{Houston}
\state{Texas, USA.} 
\postcode{77204}
}
\email{gopalpandurangan@gmail.com}

%

\begin{abstract}
The $K$-nearest neighbors is a basic problem in machine learning with numerous applications.
In this problem, given a (training) set of $n$ data points with labels  and a query point $q$, we want to assign a label to $q$ based on the labels of the
$K$-nearest points to the query. We study this problem
in the {\em $k$-machine model},\footnote{Note that parameter $k$ stands for the number of machines in the $k$-machine model and is independent of $K$-nearest points.} a model for distributed large-scale data.  In this model, we assume that the $n$ points are distributed  among the $k$ machines and  the goal is to  compute an answer given a query point to a machine using a small number of communication rounds.

Our main result is a randomized algorithm in the $k$-machine model that runs in $O(\log K)$ communication rounds with high success probability (regardless of the number of machines $k$ and the number of points $n$). The message complexity of the algorithm
is small taking only $O(k\log K)$ messages. Our bounds are essentially the best possible for comparison-based algorithms.\onlyLong{\footnote{Algorithms that use only comparison operations ($\leq, \geq, =$) between elements to distinguish the ordering among them.}}
We also implemented our algorithm and show that it performs well in practice. 
\end{abstract}
\onlyShort{
\copyrightyear{2020}
\acmYear{2020}
\setcopyright{rightsretained}
\acmConference[SPAA '20]{Proceedings of the 32nd ACM Symposium on Parallelism in Algorithms and Architectures}{July 15--17, 2020}{Virtual Event, USA}
\acmBooktitle{Proceedings of the 32nd ACM Symposium on Parallelism in Algorithms and Architectures (SPAA '20), July 15--17, 2020, Virtual Event, USA}\acmDOI{10.1145/3350755.3400268}
\acmISBN{978-1-4503-6935-0/20/07}
}

\begin{CCSXML}
<ccs2012>
   <concept>
       <concept_id>10003752.10003809.10010172</concept_id>
       <concept_desc>Theory of computation~Distributed algorithms</concept_desc>
       <concept_significance>500</concept_significance>
       </concept>
    <concept>
       <concept_id>10002950.10003648.10003671</concept_id>
       <concept_desc>Mathematics of computing~Probabilistic algorithms</concept_desc>
       <concept_significance>500</concept_significance>
       </concept>
   <concept>
       <concept_id>10002950.10003624</concept_id>
       <concept_desc>Mathematics of computing~Discrete mathematics</concept_desc>
       <concept_significance>300</concept_significance>
       </concept>
 </ccs2012>
\end{CCSXML}

\ccsdesc[500]{Theory of computation~Distributed algorithms}
\ccsdesc[500]{Mathematics of computing~Probabilistic algorithms}
\ccsdesc[300]{Mathematics of computing~Discrete mathematics}
%
\keywords{$K$-Nearest Neighbors, Randomized selection, $k$-Machine Model, Distributed Algorithm, Round complexity, Message complexity}

\maketitle


\section{introduction}
\label{sec:intro}

The $K$-nearest neighbors is a well-studied problem
in machine learning with numerous applications. (e.g., \cite{shalev2014understanding}). It
is a non-parametric method used for classification and regression, especially in application such as pattern recognition.
The algorithmic problem is as follows.
We are given a (training) set of $n$ data points ($n$
can be potentially very large and/or each point can be
in a high dimensional space) with labels and a query point $q$. The goal is to assign a label to $q$ based on the labels of the
{\em $K$-nearest points to the query}. \longOnly{Typically, the $n$ points may be in some $d$-dimensional space and we assume that there
is a metric that given two points computes the distance
between the two points (commonly used metrics include Euclidean distance or Hamming distance). }In the {\em classification} problem, one can use the majority
of the labels of the $K$-nearest neighbors to assign
a label to $q$. In the {\em regression} problem, one can assign
the average of the labels (assuming that these are values)
to $q$.

In this paper \onlyShort{(see full version \cite{disknn2020})}, we study distributed algorithms for the $K$-nearest neighbors problem motivated by Big Data and privacy applications. When the data size is very large \longShort{(even storing all points in a single machine might be memory intensive), then distributed computation using multiple machines is helpful. Another even more relevant motivation for distributed computing is that in many instances data is naturally distributed at $k$-sites (e.g., patients data in different hospitals) and it is too costly or undesirable (say for privacy reasons) to transfer all the data to a single location for computing the answer.}{or naturally distributed at $k$-sites (e.g., patients data in different hospitals), then distributed computation using multiple machines is helpful.} 

\longShort{\subsection{Model}}{\noindent{\textbf{Model:}}}
We study the $K$-nearest neighbors problem
in the {\em $k$-machine model}, a model for distributed large-scale data.  (Henceforth, to avoid confusion, between
$K$ and $k$, which are unrelated we
will say $\ell$-nearest neighbors).
The $k$-machine model was introduced in~\cite{klauck2015distributed} and further investigated
in~\cite{chung2015distributed, PanduranganRS16Journal,bandyapadhyay2018near,pandurangan2018distributed}.
The model consists of a set of $k \geq 2$ machines $\{M_1,M_2,\dots,M_k\}$ 
that are pairwise interconnected by bidirectional point-to-point communication links.
Each machine executes an instance of a distributed algorithm. The computation advances
in synchronous rounds where, in each round, machines can exchange messages over their
communication links and perform some local computation. Each link is assumed to have
a bandwidth of $B$ bits per round, i.e., $B$ bits can be
transmitted over each link in each round;  unless otherwise stated, we assume $B = \Theta(\log (n))$. Machines do not share any memory and have no other
means of communication. We assume that each machine has access to a private source of true random bits.

Local computation within a machine is considered to happen instantaneously at zero cost, while the exchange of messages between machines is the costly operation.
However, we note that in all the algorithms of this paper, every machine in every round performs lightweight computations. \onlyLong{In particular, these computations are bounded by (essentially) linear in the size of the input assigned to that machine.}
The goal is to design algorithms that take {\em as few communication rounds as possible}.

\longOnly{
We say that algorithm $\cA$ has \emph{$\epsilon$-error} if, in any run of $\cA$, the output of the machines corresponds to a correct solution with probability at least $1 - \epsilon$.
To quantify the performance of a randomized (Monte Carlo) algorithm $\cA$, we define the \emph{round complexity of $\cA$}  to be the worst-case number of rounds required by any machine when executing $\cA$. 

For the $\ell-$nearest neighbors problem in the $k$-machine model, We assume that the $n$ points are distributed (in an arbitrary fashion) among the $k$ 
machines and the goal is to compute an answer
given a query point in as few rounds as possible. 
We assume that the query point is given to all  machines (or equivalently to a single machine, which can broadcast
to all machines in a round).
}
\longShort{\subsection{The Selection Problem}}{\noindent{\textbf{The Selection Problem:}}}
We note that the $\ell$-nearest neighbors problem really boils down to the {\em selection} problem, where
the goal is to find the $\ell$-smallest value in a 
set of $n$ values. The selection problem has a (somewhat non-trivial) linear time deterministic  algorithm \cite{Cormenbook} as well as simple randomized 
algorithm in the sequential setting. For the
$\ell$-nearest neighbors, one can reduce it to
the selection problem by computing the distance of the query point 
to all the points and then finding the $\ell$-smallest distance among these $n$ distance values. 
All these can be done in $O(n)$ time sequentially.

\longShort{\subsection{Our Results}}{\noindent{\textbf{Our Results:}}}
In this paper, we present efficient bounds
for the $\ell$-nearest neighbors or equivalently
to the $\ell$-selection problem.
Our main result is a randomized algorithm in the $k$-machine model that runs in $O(\log (\ell))$ communication rounds with high probability (regardless of the number of
machines $k$). The message complexity of the algorithm
is also small taking only $O(k\log (\ell))$ messages. Note that if $\ell$ is not very large (which is generally true in practice), then these bounds imply very fast algorithms requiring only a small number of rounds regardless of the number of points and the number of sites (machines). 

Our bounds are essentially the best possible for comparison-based\footnote{We conjecture that the lower bound holds even for non-comparison based algorithms.} algorithms, i.e., algorithms that use only comparison operations ($\leq, \geq, =$) between elements to distinguish the ordering among them. This is due to the existence of a lower bound of $\Omega(\log (n))$ communication rounds (if only one element is allowed to be exchanged per round) for finding the {\em median} of $2n$ elements distributed evenly among two processors \cite{rodeh}. 

One advantage of our distributed algorithm is that a machine exchanges only
{\em distances} between the query point and the points that it contains
and does not send the points themselves. Typically in many applications points have high dimensions and distances don't reveal much about the points. This can be useful in privacy applications. 

We also implement and test our algorithm in a distributed cluster, and show that it performs well compared to a simple algorithm that sends
$\ell$ nearest points from each machine to a single machine
which then computes the answer.
\onlyLong{
In the  simple algorithm each machine locally finds its $\ell$-nearest  points to the query, gathers them on a single machine, and then finds the final $\ell$-nearest points among these $k\ell$ points. Note that this takes
$O(\ell)$ rounds in the $k$-machine model --- exponentially more than our algorithm.}


\onlyLong{
\longShort{\subsection{Related Work}}{\noindent{\textbf{Related Work:}}}
 Methods in \cite{cahsai2017scaling, yang2018efficient} use binary search over the distance of the points from the query point.
 The work of \cite{saukas1998efficient} which is the closest to the spirit of our work, proposed a new distributed algorithm for selection problem aiming to reduce the communication cost. In a model similar to the $k$-machine model (but without explicit bandwidth constraints) they present an algorithm that runs in $O(\log(k\ell))$ rounds\onlyLong{ and $O(k\log(k\ell)\log (\ell))$ message complexity. Their algorithm is deterministic and uses a technique of weighted median}. \longShort{

There are several other works that  investigate applications of $\ell$-nearest neighbors, e.g., see \cite{gao2018demystifying, yu2005monitoring}. Liu et al. in \cite{liu2007clustering} applied $\ell$- nearest neighbors for processing large scale image processing.  Yang et al. \cite{yang2018efficient}  find $\ell$-nearest neighbor objects over moving objects on a large-scale data set. }{There are several other works that  investigate applications of $\ell$-nearest neighbors, e.g., see \cite{gao2018demystifying, yu2005monitoring, yang2018efficient, gao2018demystifying}.}
}

\onlyLong{We remark that in the sequential setting, {\em k-d tree} (short for $k$-dimensional tree) is a well-studied space-partitioning data structure that is used to speed up the processing of nearest neighbor queries \cite{bentley,friedman}.
While k-d tree can help in speeding up computation in the sequential setting, in the $k$-machine model we are concerned only on minimizing the number of communication rounds (and ignoring local computation within a machine). \onlyLong{In the sequential setting, under certain assumptions  k-d tree can give even logarithmic complexity per query point\cite{friedman}.
Here, as far as the round complexity is concerned, this does not matter, since we  can simply send the query point to all machines (takes 1 round)  who then locally compute the distances from the query point to their respective points  and then find the nearest neighbors in 
$O(\log (\ell))$ rounds (does not depend on $k$, the number of machines) using our algorithm. As mentioned earlier, this round complexity is tight in general. }Patwary et al. in \cite{patwary2016panda} used the k-d tree to achieve faster $\ell$-NN calculation in distributed setting.\onlyLong{ They 
implemented a distributed $\ell$-NN based on k-d tree that parallelizes both k-d tree
construction and querying. They created a
large k-d tree for all the points that necessarily involves
global redistribution of points in their k-d tree construction phase.} Since their dimension based redistribution depends on the distribution of input data, their message \onlyLong{and runtime complexity (communication over network) }would be costly.\onlyLong{ Their algorithm would even experience a high round complexity in their construction phase until each node has a non-overlapping subset of input data.}
}

\longShort{\subsection{Definitions}}{\noindent{\textbf{Definitions:}}}
\label{sec:preli}
\onlyShort{We use the notation $dis(p, q)$ to denote the distance between two given points $p$ and $q$ where it can be any absolute norm $||p-q||$.}
\onlyLong{We use the notation $dis(p, q)$ to denote the distance function between two given points $p$ and $q$, where the distance $dis(p, q)$ can be any absolute norm $||p-q||$ distance. Formally, the $\ell$-nearest neighbors problem can be stated as follows.}
\begin{definition}[$\ell$-NN problem]
	Given an input data set $D$, a query data point $q$, and a number $\ell$ while $\ell \leq |D|$, the $\ell$-Nearest Neighbors ($\ell$-NN) problem is finding a set of data points $S$ such that 
	$(S \subset D) \land (|S|=\ell) \land (dis(p_i, q) \leq dis(p_j, q), \forall p_i \in S, p_j \in D\setminus S)$. 
\end{definition}

\section{The Algorithm}
\label{sec:algo}

First we present a distributed algorithm to solve a more general {\em selection} problem:  finding $\ell$-smallest points among $n$ points. Suppose $n$ points are distributed over $k$ machines arbitrarily. The problem is to find the $\ell$-smallest points among those $n$ points. In the end, each machine $i$ outputs a set of points $S_i$ such that $\cup_{i=1}^k S_i$ contains the $\ell$-smallest points. Then we use this algorithm to solve the $\ell$-nearest neighbors problem. For simplicity, let us assume that the points are all distinct; later we explain a simple extension in the algorithm to work for non-distinct points set. To solve this problem we implement the idea of randomized selection \cite{Cormenbook} in the $k$-machine model. 

We point out an implementation issue on the size of the messages used by our algorithm for the nearest neighbors problem. 
For the purpose of analysis, we can assume that each point (or value)
is of size $O(\log (n))$ bits and hence can be sent
through an edge per round in the $k$-machine model. However, for the $\ell$-nearest neighbor problem, points can be high-dimensional and can incur a lot of bits. But it is easy to see that one need not actually transfer points, but only {\em distances} between
the query point to the given (training set) points.
In fact, one can use randomization to choose a unique ID
for each of the $n$ points (choose a random number between say $[1,n^3]$ and they will be unique with high probability). Then one needs to transfer only the ID of the point (of size $O(\log (n))$ bits) and
its corresponding value (distance between the point and the query point)  which we assume can be represented in 
$O(\log (n))$ bits, i.e., all distances are polynomial in $n$.\footnote{We note that if distances are very large, one
can use scaling to work with approximate distances which
will be accurate with good approximation.} Note that
choosing unique IDs also takes care of non-distinct points
as we can use IDs to break ties between points of equal distances.

\subsection{Distributed Selection Algorithm}
This algorithm is a  distributed implementation
of a well-known randomized (sequential) selection algorithm (see e.g.,\cite{Cormenbook}).
The algorithm first elects a leader machine (among the $k$ machines) which propagates the queries and controls the search process. Since the machines have unique IDs, the leader (say, the minimum ID machine) can be elected in a constant number of rounds and $O(\sqrt{k}\log^{3/2}(k))$ messages \cite{kutten2015sublinear}. The leader repeatedly computes a random pivot which partitions the points set into two parts and reduces the search space, i.e., the set of points on which the algorithm executes. Let us now discuss how the leader computes a random pivot and partitions the search space in $O(1)$ rounds.
This constitutes one ``iteration" of the selection algorithm. The leader maintains two boundary variables, namely, $\min$ and $\max$ such that the search points belong to the range $[\min, \max]$. Initially, $\min$ and $\max$ are assigned respectively the minimum (denoted by $\min$) and maximum (denoted by $\max$) value among all the data points. Notice that the leader can get this global minimum and maximum point by asking all the machines their local minimum and maximum in $2$ rounds.

The leader asks the number of points that each machine holds in the range $[\min, \max]$. 
The leader randomly picks a machine $i$ with probability proportional to the number of points a machine holds within the range of $[\min, \, \max]$, i.e., with probability $n_i/\sum_{i=1}^k n_i$, where $n_i$ is the number of points machine $i$ holds in the range. The selected machine $i$ chooses a point $p$ randomly from its set of points in the range $[\min, \max]$. Then it replies back to the leader machine with the pivot $p$.
In the next round, the leader asks the number of points each machine holds within the range $[\min, p]$. Then it gathers all machines' count $n_i$ and accumulates it to $s =\sum_{i=1}^{k} n_i$. If $s=\ell$, it found the correct upper boundary value and terminates the search process. If $s<\ell$, it means the algorithm needs to increase the lower boundary $\min$ to $p$ and adjust the $\ell$ value by subtracting $s$ from $\ell$, i.e., $\ell = \ell - s$. On the other hand, if $s>\ell$, it can discard all the points greater than $p$ by setting $\max$ to $p$. The leader iterates this process until it finds the correct upper boundary. Once the leader finds the correct upper bound ($\max$), it broadcasts a `finished' message with parameter $\max$ so that each machine outputs all the points less than or equal to $\max$ from its input set. 
\onlyLong{The pseudocode is given in Algorithm~\ref{alg:smallest-l-points}.} 
\onlyLong{
\begin{algorithm}
\caption{Finding $\ell$-Smallest-Points}
\label{alg:smallest-l-points}
\begin{flushleft}
\textbf{Input:} $n$ points distributed over $k$ machines (arbitrarily) and $\ell$\\
\textbf{Output:} $\ell$-smallest points among the $n$ data points.
\end{flushleft}  
	\begin{algorithmic}[1]
		\State If there is not a known leader machine $l$ among the $k$ machines, elect one. The leader $l$ runs the following steps.
		\State Leader broadcasts a query message to get the values $(n_i, m_i, M_i)$ from all the machines, where $n_i$ is the no. of points machine $i$ holds, $m_i$ is minimum value and $M_i$ is maximum value among $n_i$ points. 
		\State $\min \gets \min_i\{m_i\}$, $\max \gets \max_i\{M_i\}$, $s\gets \sum_{i=1}^k n_i$ 
		\While{$s > \ell$} \Comment{Each loop runs in  synchronous rounds}
		\label{line:loopb}
		\State Leader selects a random pivot $p$ in the range $[\min, \max]$ by:
		\label{line:pickvalue}
		\begin{enumerate}
			\State Picks a machine $i$ with probability $p_i = \frac{n_i}{s}$ and informs the machine $i$.
			\State Machine $i$ selects a point $p$ uniformly at random from its $n_i$ points and replies back to the leader.
		\end{enumerate}
		\State Leader broadcasts query message $getSize(\min, \, p)$.
			\label{line:bcast1}
		\State Each machine $i$ replies to the leader with $n_i= |\{x\, |\, \min \leq x \leq p\}|$.
			\label{line:reply1}
		\State Leader calculates $s \gets \sum_{i=1}^{k} n_i$
			\label{line:calcnums}
		\If{$s<\ell$}
		    \State $\ell \gets \ell - s$
		    \State $\min \gets p$
		\Else
		    \State $\max \gets p$
		\EndIf
		\EndWhile
		\label{line:loope}
		
		\State Leader broadcasts `finished($\max$)' and each machine outputs all the points satisfying $\{x \, |\, x \leq \max\}$ from its input set 
	\end{algorithmic}
\end{algorithm}
}

\noindent{\textbf{Correctness:}}
In Lemma~\ref{lem:random-pivot}, we show that the leader machine computes the pivot $p$ uniformly at random among all the search points in the range. The algorithm updates boundary values $\min, \max$ and the $\ell$-value according to the randomized selection algorithm. The boundary initialization makes sure that it includes all the data points in the beginning. Thus the algorithm correctly computes the $\ell$-smallest points.
\begin{lemma}\label{lem:random-pivot}
The leader machine \onlyLong{in Algorithm \ref{alg:smallest-l-points} }selects the pivot $p$ uniformly at random from all the points in the range $[\min, \max]$.
\end{lemma}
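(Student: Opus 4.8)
The plan is to unwind the two-stage random experiment that the leader uses to produce the pivot and show that the composed distribution is uniform over the current search space. First I would fix the current boundary values $\min$ and $\max$, let $n_i$ be the number of points that machine $i$ holds in the range $[\min,\max]$, and set $s=\sum_{i=1}^k n_i$, the total number of candidate points in that range (note that $n_i$, and hence $s$, is recomputed for the \emph{current} range in each iteration, not the original $n$). Recall that the leader's pivot step is: (i) pick a machine $i$ with probability $n_i/s$, and then (ii) machine $i$ returns a point drawn uniformly at random from its $n_i$ in-range points; these two draws are independent by construction.

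Next I would compute, for an arbitrary fixed point $p$ in the range, the probability that $p$ is returned as the pivot. Since the points are distinct (and in any case unique IDs break ties), $p$ is held by exactly one machine, say $i$. The event ``$p$ is the pivot'' is precisely the intersection of ``machine $i$ is selected in step (i)'' and ``$p$ is the point chosen in step (ii)'', and by independence
\[
\Pr[\text{pivot}=p] \;=\; \frac{n_i}{s}\cdot\frac{1}{n_i} \;=\; \frac{1}{s}.
\]
Because this probability does not depend on which point $p$ in $[\min,\max]$ we picked, every one of the $s$ points in the range is equally likely to be the pivot, which is exactly the claim of the lemma.

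There is no real obstacle here; the points to be careful about are bookkeeping rather than technical. I would make explicit (a) that the leader can sample machine $i$ with probability exactly $n_i/s$ from only the counts it has gathered, so no point values need to be known at that stage, and (b) that machine $i$ can draw one of its $n_i$ in-range points uniformly at random using only local randomness and no communication, so the whole pivot selection costs $O(1)$ rounds, as the surrounding round-complexity analysis requires. Finally, I would remark that once the pivot is uniform over the current range, the standard analysis of randomized selection applies verbatim to bound the number of iterations, which is used later in the running-time argument.
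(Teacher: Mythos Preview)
Your argument is correct and is essentially identical to the paper's own proof: fix a point $p$ held by machine $i$, multiply the probability $n_i/s$ of choosing machine $i$ by the probability $1/n_i$ of that machine then choosing $p$, and conclude $1/s$. The additional remarks you make about distinctness, independence, and the $O(1)$-round implementation are sound but go slightly beyond what the paper records for this lemma.
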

\onlyLong{\begin{proof}
		Assume there are total $n$ points in the range $[\min, \max]$ distributed over all $k$ machines. The leader selects a machine $i$ with probability $\frac{n_i}{n}$, where $n_i$ is the number of points machine $i$ holds within the range and $n=\sum_{i=1}^{k} n_i$. Now the selected machine $i$ picks the point $p$ randomly among $n_i$ points i.e., with probability $\frac{1}{n_i}$. Therefore the point $p$ (pivot) is selected with probability $\frac{n_i}{n}\cdot \frac{1}{n_i}=\frac{1}{n}$. 
\end{proof}}
Using the above lemma, we show \onlyShort{(in the full paper \cite{disknn2020})} that  the number of elements
in the search process (i.e., in the range  $[\min, \max]$) drops by a constant factor with constant probability.
This implies  that the algorithm stops in $O(\log (n))$ rounds with high probability.
\begin{theorem}\label{thm:smallest-l}
	\longShort{Algorithm \ref{alg:smallest-l-points}}{The above selection Algorithm} computes the $\ell$-smallest points among the $n$ points in the $k$ machine model in $O(\log (n))$ rounds with high probability, and incurs $O(k \log (n))$ messages with high probability, i.e., with probability at least $1-1/n$.
\end{theorem}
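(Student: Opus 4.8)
The plan is to decompose the cost into: (a)~the number of rounds and messages used by a single iteration of the while loop of Algorithm~\ref{alg:smallest-l-points}, multiplied by (b)~the number of iterations until termination, plus the one-time cost of electing the leader and of the initial query for the tuples $(n_i,m_i,M_i)$. The one-time cost is $O(1)$ rounds and, using the leader election of~\cite{kutten2015sublinear} together with one broadcast/convergecast over the clique, $O(k)$ messages, which is dominated by the overall bound and can be ignored.

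For~(a): one iteration has the leader inform a single machine ($1$ round), that machine return a pivot $p$ ($1$ round), the leader broadcast $getSize(\min,p)$ to all $k$ machines ($1$ round), every machine reply with its local count ($1$ round), and the leader then update $\min,\max,\ell,s$ locally. Hence each iteration costs $O(1)$ rounds and $O(k)$ messages, and --- as discussed in Section~\ref{sec:algo} --- every message carries a single $O(\log n)$-bit value or ID, so the bandwidth bound $B=\Theta(\log n)$ is respected.

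For~(b), the heart of the argument, I would track $m$, the number of input points currently in the active range $[\min,\max]$; initially $m=n$. By Lemma~\ref{lem:random-pivot} the pivot $p$ is uniform among these $m$ points, so its rank $r$ within the active set is uniform on $\{1,\dots,m\}$, and the count $s=\sum_i n_i$ returned to the leader equals $r$. A short case check on the branch taken --- the new active set is essentially the $m-r$ points above $p$ when $s<\ell$, and the $r$ points at most $p$ when $s\ge\ell$ --- shows that whenever $m/4\le r\le 3m/4$ the new value of $m$ drops to at most $\tfrac{3}{4}$ of the old one; call such an iteration \emph{good}, and note $\Pr[\text{iteration is good}\mid\text{past}]\ge\tfrac{1}{2}$ by uniformity of $r$. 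Since $m\ge\ell\ge1$ throughout the loop and a $\tfrac{3}{4}$-factor drop can occur at most $\log_{4/3}n=O(\log n)$ times, it suffices to show that among the first $T=c\log n$ iterations (for a suitable constant $c$) at least $\log_{4/3}n$ are good, except with probability at most $1/n$.

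This concentration step is the main obstacle, because the good events in successive iterations are not independent: the pivot distribution in iteration $i$ depends on the random outcomes of iterations $1,\dots,i-1$. I would handle it by observing that, conditioned on any history $\mathcal F_{i-1}$, the indicator $X_i$ of a good $i$-th iteration satisfies $\Pr[X_i=1\mid\mathcal F_{i-1}]\ge\tfrac{1}{2}$, so $\sum_{i\le T}X_i$ stochastically dominates a $\mathrm{Binomial}(T,\tfrac{1}{2})$ random variable (equivalently, one applies a Chernoff--Azuma bound directly to the $X_i$). Taking $c$ large enough that $T/4\ge\log_{4/3}n$, a standard Chernoff lower-tail bound then gives $\Pr[\sum_{i\le T}X_i<\log_{4/3}n]\le 1/n$, so the loop terminates within $O(\log n)$ iterations with high probability. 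Multiplying by the per-iteration costs from~(a) and adding the one-time cost yields $O(\log n)$ rounds and $O(k\log n)$ messages, each with probability at least $1-1/n$; correctness has already been argued above using Lemma~\ref{lem:random-pivot}.
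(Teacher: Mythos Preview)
Your proposal is correct and follows essentially the same approach as the paper: both argue that a uniformly random pivot lands in the ``middle'' of the active range with constant probability (you use the middle half with probability $\tfrac12$ and shrink factor $\tfrac34$; the paper uses the middle third with probability $\tfrac13$ and shrink factor $\tfrac23$), bound the number of good iterations by $O(\log n)$, and then apply a Chernoff bound to conclude that $O(\log n)$ iterations suffice with high probability, each costing $O(1)$ rounds and $O(k)$ messages. Your treatment of the dependence between iterations via stochastic domination by a binomial is more explicit than the paper's, which simply invokes Chernoff, but the underlying argument is the same.
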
\onlyLong{
\begin{proof}
The algorithm correctly outputs the $\ell$-smallest points among the $n$ points distributed arbitrarily over $k$ machine model. This can be shown by  a straightforward induction which is similar to the sequential randomized selection algorithm, see e.g., \cite{Cormenbook}. 

Now we show that the algorithm terminates in $O(\log (n))$ rounds with high probability. For the analysis, consider all the points are sorted and placed in an array, although they are in different machines and a machine cannot see the other points. The pivot $p$ is selected uniformly at random from all the points, see  Lemma \ref{lem:random-pivot}. The pivot partitions the set of points into two sets. Let us consider the partition outcomes into good or bad sets. Let the good outcome be that where the pivot is chosen in the middle third of the sorted array, otherwise it's a bad outcome. If the outcome is a good set, then  it discards at least $\frac{1}{3}$ fraction of the points in the range. Thus we define an event $A$ to be a {\em good event} if the randomized partitioning gives good sets, and the complement $\bar{A}$ to be the {\em bad event}. 

   The number of good events cannot be more than $\log_{3/2} (n)$ as each good event keeps at most $\frac{2}{3}\dot n$ points and discards the rest, i.e., all the points will
 be exhausted after $\log_{3/2} (n)$ good events. Since a good event occurs with probability $\frac{1}{3}$, in expectation, an execution path of length $L$ will have $\frac{1}{3}L$ good events. That is, to get $\log_{3/2} (n)$ good events, in expectation the execution path length is at most $3\log_{3/2} (n)$. In other words, in expectation, the algorithm recurs $c\log (n)$ times, where $c$ is a constant such that $3 \log_{3/2} (n) < c \log (n)$. Then applying a standard Chernoff bound \cite{mitzenmacher2017probability}, it can  be shown that the number of iterations cannot be more than $O(\log (n))$ with high probability. Consequently, with $n$ elements in play at the start, the union bound also gives high probability bound on the round complexity of $O(\log (n))$ of the algorithm. 
  
Finally the message complexity of the algorithm is $O(k \log (n))$ with high probability as the leader communicates with all the other machines a constant number of times in a single iteration. Each time the message cost is $O(k)$ through $k-1$ edges from the leader to all the other machines. The massage complexity of leader election algorithm \cite{kutten2015sublinear} is $O(\sqrt{k}\log^{3/2}(k))$. Hence the claimed message complexity bound.      
\end{proof}}

\subsection{Distributed $\ell$-NN Algorithm}\label{sec:fast-algo}
We extend the above algorithm to compute $\ell$-nearest neighbors (or, $\ell$-NN) of a given query point $q$ from a large data set $D$ distributed over the $k$ machines. Assume the machine $i$ gets the set of points $D_i$ as input. We assume that $|D_i| \leq \ell$ for all the machines, since, if a machine $i$ gets more than $\ell$ data points as input, it keeps only $\ell$ points whose distance from $q$ is minimum and discards the rest of the data points. \longShort{This is because a single machine can hold at most all the $\ell$-NN points. Thus a maximum of $k\ell$ input points to be considered to compute $\ell$-NN points. Notice that by applying the Algorithm~\ref{alg:smallest-l-points} directly on these $k\ell$ points one can design an algorithm which computes $\ell$-NN in $O(\log (\ell) + \log (k))$ rounds. In fact, e}{E}ach machine $i$ locally computes the distance $d_{ij} = dis(p_{ij}, q)$ such that all the points $p_{ij} \in D_i$ and maintains the pair $(p_{ij}, \, d_{ij})$. \longShort{Then the system runs the Algorithm~\ref{alg:smallest-l-points}}{Then we apply the selection algorithm} on the distance values $\cup_{i=1}^k d_{ij}$ and \longShort{outputs}{output} the corresponding points $p_{ij}$s. This takes $O(\log (k\ell)) = O(\log (k) + \log (\ell))$ rounds, since the number of candidate points is at most $k\ell$.\footnote{The number of rounds will hold under expectation and with probability guarantee
at least  $1-\frac{1}{k \ell}$, since the number of
points is $k\ell$. If  $\ell$ is very small,
say constant, then even a trivial algorithm of transferring the points to the leader machine will give a small number of rounds.}

We now present a randomized algorithm (Algorithm~\ref{alg:2}) whose running time is $O(\log (\ell))$ rounds, which is independent of the number of machines $k$. 
The main idea of the algorithm is to apply a sampling technique to reduce the 
search space (i.e., candidate points) from (at most) $k\ell$  to $O(\ell)$. Then we apply the \longShort{Algorithm~\ref{alg:smallest-l-points}}{distributed selection algorithm} on these reduced set of candidate points to obtain our main result\onlyShort{(proof in full paper \cite{disknn2020})}. 

\begin{algorithm}[t]
	\caption{Distributed $\ell$-NN Computation}
	\label{alg:2}
	\begin{flushleft}
	\onlyLong{
	\textbf{Input:} $n$ points distributed over $k$ machines arbitrarily, where each machine gets approximately $n/k$ points, query point $q$ and the parameter $\ell$.\\
	\textbf{Output:} $\ell$-nearest neighbors to the query point q.
	}
	\onlyShort{
	\footnotesize{
	\textbf{Input:} Query point $q$, the parameter $\ell$.\\
	\textbf{Output:} $\ell$-nearest neighbors to the query point q.
	}
	}
	\end{flushleft}
	\begin{algorithmic}[1]
	\State Elect a leader machine among $k$ machines (using the leader election algorithm in \cite{kutten2015sublinear}). 
	\State If a machine $i$ has more than $\ell$ data points, it keeps $\ell$ points whose distance from $q$ is minimum and discards other points. Further, if a machine has less than $\ell$ data points, it adds enough sentinel points of $\infty$ (infinity) value to make the size of the data points exactly $\ell$. Let's denote the points set at machine $i$ by $S_i$, where $|S_i| = \ell$. \label{line:makesizel}   
		\State Each machine $i$ samples $12 \log (\ell)$ points randomly and independently from the set $S_i$.
		\State Each machine sends its sampled points to the leader machine. \label{fast:initial-time} 
		\label{fast:line:gather}
		\State Leader sorts these $12k\log (\ell)$ based on their distance from $q$ and stores in an array. Let $r$ be the point at index $21\log(\ell)$ in the sorted array.
		\label{fast:line:p}
		\State Leader broadcasts point $r$.
		\State Each machine $i$ removes any point larger than $r$ and any of added fake data from the set $S_i$. 
		\label{fast:line:prone}
		\State Each machine $i$ computes $d_{ij} = dis(p_{ij}, q)$ for all $p_{ij}\in S_i$ and stores them as $(p_{ij},\, d_{ij})$. 
		\label{fast:line:afterp}
	
	\State The leader machine runs the \longShort{Algorithm~\ref{alg:smallest-l-points}}{distributed selection algorithm} where the input to the algorithm is those $d_{ij}$ points.
	\State Each machine outputs the $p_{ij}$ points corresponding to the output points $d_{ij}$ of the \longShort{Algorithm~\ref{alg:smallest-l-points}}{selection algorithm}.	
	\end{algorithmic}
\end{algorithm}
\onlyLong{
\begin{figure}[H]
	\includegraphics[width=0.7\textwidth]{./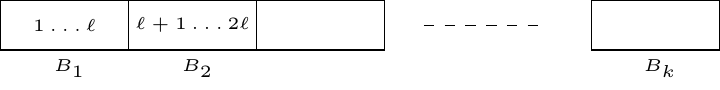}
	\caption{An ascending sorted array $B$ of the $k\ell$ points based on their distances from the query point $q$.}
	\label{fig:sortall}
\end{figure}
}

\longOnly{
\begin{lemma}\label{lem:samplinglogl}
The initial sampling of the Algorithm~\ref{alg:2} reduces the search points to $O(\ell)$ (from $k\ell$ points) with high probability (in $\ell$). 
\end{lemma}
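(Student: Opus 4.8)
The plan is to analyze the pruning step through the \emph{global} sorted order of all $k\ell$ points by distance from the query $q$. Write $B[1], B[2], \dots, B[k\ell]$ for these points (the real points together with the padded $\infty$-sentinels, sentinels placed last, ties broken by the unique IDs of Section~\ref{sec:algo}), and let $\mathrm{rank}_B(p)$ denote the position of $p$ in this list. After Step~\ref{fast:line:prone}, every point that survives at any machine is at least as close to $q$ as $r$ is, hence has rank at most $\mathrm{rank}_B(r)$; therefore the total number of surviving points is at most $\mathrm{rank}_B(r)$. So it suffices to prove that $\mathrm{rank}_B(r) \le 3\ell$ with probability $1 - \ell^{-\Omega(1)}$ (the constant $3$ is not special and trades off against the failure exponent).

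First I would fix the prefix $P = \{B[1], \dots, B[3\ell]\}$ of the $3\ell$ closest points. (If $k < 3$ then $k\ell \le 3\ell$ and the bound $O(\ell)$ is immediate, so assume $k \ge 3$, so that $P$ really has $3\ell$ elements.) Let machine $i$ hold $a_i \ge 0$ of the points of $P$, where $\sum_{i=1}^k a_i = 3\ell$ — crucially, the $a_i$ may be spread among the machines completely adversarially. Let $N$ be the number of the $12k\log\ell$ sampled points (over all machines) that fall in $P$. Since each of machine $i$'s $12\log\ell$ samples is drawn uniformly from its $\ell$ points, it lands in $P$ with probability $a_i/\ell$, so by linearity $\mathbb{E}[N] = \sum_{i=1}^k 12\log\ell \cdot a_i/\ell = 36\log\ell$. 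The key structural observation is that $\mathrm{rank}_B(r) \le 3\ell$ whenever $N \ge 21\log\ell$: indeed $r$ is by definition the $(21\log\ell)$-th smallest among the $12k\log\ell \ge 24\log\ell > 21\log\ell$ samples, so if at least $21\log\ell$ samples lie in $P$ then so does the $(21\log\ell)$-th smallest, namely $r$.

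Next I would bound $\Pr[N < 21\log\ell]$ by a Chernoff argument. The samples are drawn independently across (and, with replacement, within) machines, so $N$ is a sum of $12k\log\ell$ independent $\{0,1\}$ indicators and the multiplicative lower-tail Chernoff bound applies directly; the same bound holds even if samples within a machine are drawn without replacement, by negative association. Writing $21\log\ell = (1-\delta)\mathbb{E}[N]$ with $\delta = 1 - 21/36 = 5/12$, we get $\Pr[N < 21\log\ell] \le \exp\!\bigl(-\delta^2\mathbb{E}[N]/2\bigr) = \exp\!\bigl(-\tfrac{25}{144}\cdot 18\log\ell\bigr) = \ell^{-\Omega(1)}$ (a fixed inverse polynomial in $\ell$). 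Hence with probability $1 - \ell^{-\Omega(1)}$ we have $N \ge 21\log\ell$, so $\mathrm{rank}_B(r) \le 3\ell$, so at most $3\ell = O(\ell)$ points survive the pruning, which is the claim. (For $\ell = O(1)$ the statement is vacuous and handled by the trivial algorithm of the footnote; also, a symmetric application of the Chernoff upper tail to the number of samples among $B[1], \dots, B[\ell]$ shows $\mathrm{rank}_B(r) \ge \ell$ with the same kind of probability, so Step~\ref{fast:line:prone} never discards a true $\ell$-nearest neighbor — this is what is needed later for correctness of Algorithm~\ref{alg:2}.)

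The main obstacle is making the concentration step airtight: the $3\ell$ closest points can be split among the $k$ machines in a worst-case way, so one must check that no such split can prevent $N$ from concentrating around $36\log\ell$. This is exactly addressed by the observation that $N$ is a sum of independent (or negatively associated) indicators no matter what the partition $(a_i)$ is, so the Chernoff bound is insensitive to it. Everything else is routine arithmetic, and the constants $12\log\ell$ (samples per machine) and index $21\log\ell$ hard-wired into Algorithm~\ref{alg:2} are chosen precisely so that $\mathbb{E}[N] = 36\log\ell$ sits comfortably above the pruning threshold $21\log\ell$, while the correctness threshold $\ell$ sits comfortably below it.
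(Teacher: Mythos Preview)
Your argument is correct and follows the same overall strategy as the paper's: analyze where the $(21\log\ell)$-th sample lands in the global sorted order of the $k\ell$ padded points, and use Chernoff bounds on sample counts to show it lands within the first $O(\ell)$ positions with high probability in $\ell$.

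The organization differs in a useful way. The paper partitions the sorted order into $k$ consecutive blocks $B_1,\dots,B_k$ of size $\ell$, bounds each block's sample count $X_i$ separately (upper tail $X_1\ge 21\log\ell$ and lower tails $X_j\le 2\log\ell$ for $j\le 11$), and then union-bounds to conclude that $r$ lies in $B_2\cup\dots\cup B_{11}$, giving at most $11\ell$ survivors. You instead apply a single lower-tail Chernoff bound to the aggregate count $N$ of samples in the prefix $P=\{B[1],\dots,B[3\ell]\}$, obtaining the sharper bound of $3\ell$ survivors with one inequality rather than a dozen. Your version is more streamlined and makes the robustness to the adversarial split $(a_i)$ explicit (via the Poisson-trials form of Chernoff and the negative-association remark for sampling without replacement), whereas the paper leaves this implicit. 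The paper's block decomposition does have the minor advantage of simultaneously yielding the correctness direction ($r\notin B_1$, so no true $\ell$-nearest neighbor is pruned) as part of the same case analysis, but you already note that this follows from the symmetric upper-tail bound on the first block.
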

\begin{proof}
Recall that the $n$ data points arbitrarily distributed over $k$ machines. 
We assume that all machines has exactly $\ell$ points after discarding extra points in a machine. Otherwise, if this is not the case, a machine can simply add enough sentinel points (say infinity value), so that all machines have exactly $\ell$ points. (See Step~\ref{line:makesizel} in Algorithm~\ref{alg:2}).   

Now for the sake of analysis, let us assume that all the $k\ell$ points are sorted based on the distances from the query point $q$ and stored in an array $B$. Let the first $\ell$ points in $B$ belong to a block $B_1$, the second $\ell$ points to a block $B_2$, and so on. So there are $k$ blocks; see  Figure~\ref{fig:sortall}.
Let $A$ be the set of sampled $12k\log(\ell)$ points, again consider sorted in ascending order. Let $X_i$ be a random variable denoting the number of these sampled points belonging to the block $B_i$. Since the points are sampled
uniformly at random in each machine, the expected value of $X_i$ is $\mu = E(X_i) = 12 \log(\ell)$. Then by Chernoff bound, $\Pr(X_i \geq (1+\delta) \mu) \leq e^{\frac{-\delta^2 \mu}{3}}$ with $\delta=\sqrt{0.5}$, we get: 
\begin{equation}
    \label{equpper}
    \Pr(X_i \geq 21 \log (\ell))  \leq \Pr(X_i\geq(1+\sqrt{0.5}) 12 \log (\ell)) \leq \frac{1}{\ell^2}.
\end{equation}
Again by Chernoff bound, $\Pr(X_i \leq (1-\delta) \mu) \leq e^{\frac{-\delta^2 \mu}{2}}$ with the same $\delta=\sqrt{0.5}$, we get:
\begin{equation}
    \label{eqlower}
    \Pr(X_i \leq 2 \log (\ell)) \leq \Pr(X  \leq (1-\delta) \mu) \leq \frac{1}{\ell^3}.
\end{equation}
		
Thus, with high probability (in $\ell$) there are at least $2\log (\ell)$ and at most $21\log(\ell)$ sampled points in the block $B_i$ and by a union bound, this holds for the first $O(\log (\ell))$ blocks for a sufficiently large constant. Let $E$ be the event that the selected point $r$ at index $21\log(\ell)$ in the array $A$ (in Step~\ref{fast:line:p} of Algorithm~\ref{alg:2}) belongs to blocks from $B_2$ to $B_{11}$ and not $B_1$ nor beyond $B_{11}$. From Equation~\ref{equpper}, the number of sampled points in block $B_1$, i.e., $X_1$ is less than $21 \log (\ell)$ w.h.p (in $\ell$). Hence, the point $r$ does not belong to $B_1$ w.h.p. Similarly, the point $r$ cannot belong to block $B_i$ for $i > 11$ as by the Equation~\ref{eqlower}, each $X_i > 2 \log (\ell)$ w.h.p (in $\ell$) and $(21 \log (\ell))/(2 \log (\ell)) < 11$. 
So the probability of the complement event of $E$ is:
		$$\Pr(\bar{E}) =  \Pr(r \in B_1) + \sum_{i > 11} \Pr(r \in B_i) \leq  \frac{1}{\ell^2} + \ell * \frac{1}{\ell^3} \leq \frac{2}{\ell^2}.$$
		
		Therefore, $\Pr(E) > 1 - \frac{2}{\ell^2}$. So the selected point $r$ belongs to a block $B_i$, $1 < i \leq 11$ with high probability (in $\ell$). That is the size of candidate points after pruning at Step \ref{fast:line:prone} becomes at most $11\ell$ with high probability (in $\ell$). 
\end{proof}}
\onlyLong{Thus we get the main result.
\begin{theorem}\label{thm:main-fast}
Algorithm~\ref{alg:2} computes $\ell$-NN in $O(\log (\ell))$ rounds and using  $O(k\log (\ell))$ messages
with high probability.
\end{theorem}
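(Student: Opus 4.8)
The plan is to obtain Theorem~\ref{thm:main-fast} by composing the two facts already established: Lemma~\ref{lem:samplinglogl}, which (with high probability in $\ell$) shrinks the candidate set from $k\ell$ points down to $O(\ell)$, and Theorem~\ref{thm:smallest-l}, which solves selection on an $n$-point instance in $O(\log n)$ rounds using $O(k\log n)$ messages. Feeding an instance of size $O(\ell)$ into the latter yields the dominant $O(\log\ell)$-round and $O(k\log\ell)$-message terms; what remains is to (i) check that the sampling-and-prune preprocessing never discards a genuine nearest neighbor, (ii) tally the cost of the preprocessing itself, and (iii) union-bound the failure events.

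For correctness I would first observe that the true $\ell$-NN points are exactly block $B_1$ of the sorted array $B$ used in Lemma~\ref{lem:samplinglogl}: they are all real points (since $\ell\le|D|$), and every sentinel $\infty$-point added in Step~\ref{line:makesizel} sits strictly beyond every real point. By Lemma~\ref{lem:samplinglogl}, with probability at least $1-2/\ell^2$ the point $r$ chosen in Step~\ref{fast:line:p} lies in some block $B_i$ with $2\le i\le 11$, so $\dist(r,q)$ is at least the $\ell$-th smallest distance; hence Step~\ref{fast:line:prone} deletes all sentinels but no point of $B_1$. Conditioned on that event, the selection routine of Theorem~\ref{thm:smallest-l} runs on at most $11\ell$ distance values that still contain all $\ell$ smallest ones, so by its correctness it outputs precisely the $\ell$ nearest neighbors.

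It remains to bound the costs and the probability. Leader election costs $O(1)$ rounds and $O(\sqrt{k}\log^{3/2}k)=O(k)$ messages~\cite{kutten2015sublinear}; Steps~\ref{line:makesizel}, \ref{fast:line:prone}, \ref{fast:line:afterp} and the final output are purely local; each machine ships its $12\log\ell$ sampled (ID, distance) pairs --- each of size $O(\log n)$ bits, hence one pair per round over a link --- to the leader in parallel over the $k-1$ leader links, costing $O(\log\ell)$ rounds and $O(k\log\ell)$ messages, while broadcasting $r$ costs $O(1)$ rounds and $O(k)$ messages; finally Theorem~\ref{thm:smallest-l} applied to an instance of size at most $11\ell$ contributes $O(\log\ell)$ rounds and $O(k\log\ell)$ messages. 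Summing gives the claimed $O(\log\ell)$ rounds and $O(k\log\ell)$ messages (for tiny $\ell$, say $\ell=O(1)$, one instead falls back on the trivial gather-to-the-leader algorithm so the message bound is not vacuous). A union bound over the constantly-many bad events behind Lemma~\ref{lem:samplinglogl} ($O(1/\ell^2)$ in total) and the failure of the selection subroutine ($O(1/\ell)$) leaves success probability $1-O(1/\ell)$, which can be amplified to $1-1/\ell^{c}$ for any constant $c$ by inflating the (constant) number of iterations inside the selection routine. The one genuinely delicate ingredient is the simultaneous size-and-coverage guarantee of the sampling step, which is precisely Lemma~\ref{lem:samplinglogl} and which I would invoke rather than reprove; the rest is routine bookkeeping.
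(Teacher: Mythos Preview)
Your proposal is correct and follows essentially the same approach as the paper: compose Lemma~\ref{lem:samplinglogl} (sampling reduces the candidate set to at most $11\ell$ points) with Theorem~\ref{thm:smallest-l} (selection on $O(\ell)$ points costs $O(\log\ell)$ rounds and $O(k\log\ell)$ messages), and account for the $O(\log\ell)$ rounds needed to ship the samples to the leader. You actually go further than the paper's own proof by supplying the correctness argument (that the pruning step never discards a point of $B_1$) and by making the union bound over failure events explicit; the paper's proof is a terse cost tally that leaves these points implicit.
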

\begin{proof}
The leader election takes $O(1)$ rounds. The initial sampling which reduces the size of candidate points to $11\ell$ takes $O(\log (\ell))$ rounds, see Step~\ref{fast:initial-time}. Then it runs the selection algorithm on these $11\ell$ points, which takes $O(\log{(11\ell)}) = O(\log (\ell))$ rounds to compute $\ell$-NN (from Theorem~\ref{thm:smallest-l}). Thus the time complexity is $O(\log (\ell))$ rounds. The message complexity is bounded by  $O(k \log (\ell))$ as both the initial sampling and the selection algorithm incur $O(k\log(\ell))$ messages.   
\end{proof}


}

\onlyShort{
\begin{theorem}\label{thm:main-fast}
Algorithm~\ref{alg:2} computes $\ell$-NN in $O(\log (\ell))$ rounds and uses  $O(k\log (\ell))$ messages
with high probability.
\end{theorem}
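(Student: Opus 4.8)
The plan is to bound the cost of Algorithm~\ref{alg:2} phase by phase and then invoke the selection routine's analysis (Theorem~\ref{thm:smallest-l}) on a provably small residual instance. Leader election via~\cite{kutten2015sublinear} costs $O(1)$ rounds and $O(\sqrt{k}\log^{3/2}(k)) = O(k\log(\ell))$ messages, and the local trimming/padding of Step~\ref{line:makesizel} (each machine keeps its $\ell$ closest points to $q$ and pads with $\infty$-valued sentinels up to exactly $\ell$) is free since local computation costs nothing. In the sampling phase, each machine draws $12\log(\ell)$ points from its set $S_i$ and ships them to the leader; because a point is encoded by an $O(\log n)$-bit ID plus an $O(\log n)$-bit distance, each link carries $O(1)$ points per round, so this phase takes $O(\log(\ell))$ rounds and sends $O(k\log(\ell))$ messages. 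Broadcasting the pivot $r$ (the sampled point at rank $21\log(\ell)$) and having every machine discard all points farther than $r$ (and all sentinels) costs an additional $O(1)$ rounds and $O(k)$ messages.

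The crux is Lemma~\ref{lem:samplinglogl}: after pruning, the number of surviving candidate points is $O(\ell)$ with probability $1 - 1/\mathrm{poly}(\ell)$. I would argue this by conceptually sorting the $k\ell$ points by distance to $q$ and cutting the sorted array into $k$ consecutive blocks $B_1, \dots, B_k$ of size $\ell$ each. For block $B_i$, the number $X_i$ of sampled points landing in it is a sum of independent indicators, and since each machine samples uniformly from its own $\ell$-point set, a short accounting gives $\mathbb{E}[X_i] = 12\log(\ell)$ for every $i$ regardless of how the block is split among machines or how many sentinels it contains. Chernoff bounds then yield $\Pr[X_i \ge 21\log(\ell)] \le \ell^{-2}$ and $\Pr[X_i \le 2\log(\ell)] \le \ell^{-3}$; a union bound over the first $O(\log\ell)$ blocks shows that, with high probability in $\ell$, $B_1$ supplies fewer than $21\log(\ell)$ samples (so $r \notin B_1$) while each of $B_2, \dots, B_{11}$ supplies more than $2\log(\ell)$ samples. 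Since $21/2 < 11$, the rank-$21\log(\ell)$ sample $r$ must lie in $B_i$ for some $2 \le i \le 11$, so at most $11\ell$ points survive the pruning. For correctness, note that $B_1$ is exactly the true $\ell$-NN set (the problem assumes $\ell \le |D|$, so $B_1$ consists of real points), and $r$ is at least as far as every point of $B_1$; hence no true neighbor is ever discarded, and the removed $\infty$-sentinels are never in the answer.

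Finally, running the distributed selection algorithm on the $\le 11\ell$ surviving candidates costs $O(\log(11\ell)) = O(\log(\ell))$ rounds and $O(k\log(11\ell)) = O(k\log(\ell))$ messages with high probability by Theorem~\ref{thm:smallest-l}, and it returns the $\ell$ smallest distances among the candidates, which — by the correctness argument above — are precisely the $\ell$ nearest neighbors of $q$ globally. Summing the four phases gives $O(\log\ell)$ rounds and $O(k\log\ell)$ messages, and a union bound over the constantly many randomized phases preserves the high-probability guarantee. I expect the main obstacle to be exactly the Chernoff calculation in the middle paragraph: the constants ($12\log\ell$ samples per machine, pivot rank $21\log\ell$, and the window $B_2,\dots,B_{11}$) must be chosen so that the upper-tail bound on $X_1$ (needed so the pruning never deletes a true neighbor) and the lower-tail bounds on $X_2,\dots,X_{11}$ (needed so the candidate set shrinks to $O(\ell)$) both fail with only inverse-polynomial probability, even after union-bounding over the $\Theta(\log\ell)$ leading blocks.
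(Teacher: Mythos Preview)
Your proposal is correct and follows essentially the same approach as the paper: a phase-by-phase accounting that invokes Lemma~\ref{lem:samplinglogl} for the sampling reduction and Theorem~\ref{thm:smallest-l} for the final selection, with the same Chernoff/block argument (blocks $B_1,\dots,B_k$, $\mathbb{E}[X_i]=12\log\ell$, tails at $21\log\ell$ and $2\log\ell$, and the $21/2<11$ conclusion). If anything, you are slightly more careful than the paper---you make explicit why $\mathbb{E}[X_i]=12\log\ell$ holds regardless of how a block is split across machines, and you state the correctness point that $r\notin B_1$ guarantees no true neighbor is pruned; the only imprecision is that the union bound is over the first $11$ (i.e., $O(1)$) blocks rather than $O(\log\ell)$, but this is a harmless overcount.
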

}

\section{Experimental Results}\label{sec:experiment}
We ran the Algorithm~\ref{alg:2} using Crill cluster from the University of Houston \footnote{http://pstl.cs.uh.edu/resources/crill-access} which has 16 NLE Systems nodes. Each node has four 2.2 GHz 12-core AMD Opteron processor (48 cores total) and 64 GB main memory. \onlyLong{We used each core as a processing unit for our experiments.}We used a (synthetic) random data set. Each process generated $2^{22}$ random points independently between $0$ and $2^{32}-1$.

We compare the performance of our $\ell$-NN algorithm with the following simple method: each machine finds its local $\ell$-NN. Then it transfers all of them to a leader machine that finds the final $\ell$-NN among those points.
For each simulation, the leader machine chooses a random number  between $0$ and $2^{32}-1$ as the query point. We ran each simulation $30$ times. 
\onlyLong{
Figure \ref{fig:per} shows our algorithm's performance compared to the simple method. We ran it for $k$ ranging between $2$ and $128$ processing units. Also, each resulting point in the figure is the average of $100$ runs of a simulation with a fixed data set and different $q$ query values. The Figure shows that when we increase the number of cores, we gain significant speed up. For example, when using $128$ cores, our algorithm finds $\ell$-NN $80$ times faster than the simple method.

We note that the speed up, measured in wall clock time is due to the fact that as the number of machines increase, the number of points per machine decreases and hence local computation is faster. Thus although, the number of rounds does not depend on the number of machines, in practice (where local computation time also counts), increasing the number of machines increases the speed up.
}
\vspace{-0.9cm}
\onlyLong{
    \begin{figure}[H]
		\includegraphics[scale=0.8]{./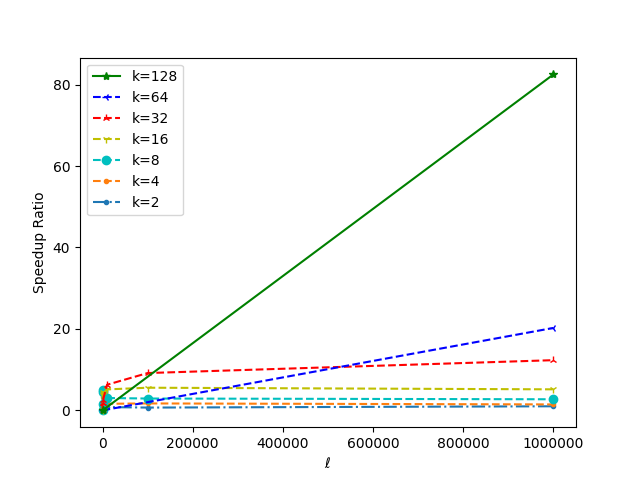}
		\caption{Run-time performance of our Algorithm~\ref{alg:2} compared to the simple method. X-axis shows the number of $\ell$-nearest neighbors w.r.t. a query point and Y-axis shows the execution time ratio of the simple method over our Algorithm~\ref{alg:2}. It shows that the higher the ratio, the higher the algorithm's speedup.}
    	\label{fig:per}
    \end{figure}
}
\onlyShort{
    \begin{figure}[t]
		  \includegraphics[scale=0.5]{./parts/figs/per.png}
        	\caption{Run-time performance of our algorithm \ref{alg:2} compared to the simple method. X-axis shows the number of $\ell$-nearest neighbors w.r.t. a query point and Y-axis shows the execution time ratio of the simple method over our algorithm \ref{alg:2}. It shows that the higher the ratio, the higher the algorithm's speedup.}
    	\label{fig:per}
    \end{figure}
}
\onlyLong{
\section{Conclusion}
We studied the well-known $K$-nearest neighbors problem in the distributed $k$-machine model. The $K$-NN problem has numerous applications 
in machine learning and other areas of sciences. Our main contribution is a randomized algorithm which computes the $K$-nearest neighbors with respect to a given query point in $O(\log (K))$ rounds with high probability. The algorithm also uses a small number of messages, incurring only $O(k\log(K))$ messages. We believe that our algorithm can be used as a subroutine for many other problems. It would be interesting to explore other machine learning problems in the $k$-machine model. }

\section*{Acknowledgement} The authors thank Aravind Srinivasan for his valuable comments. 

\balance
\bibliographystyle{abbrv}
\bibliography{./main.bib}


\end{document}